\theoremstyle{plain}
\def\Box{\vcenter{\vbox{\hrule\hbox{\vrule
     \vbox to 8.8pt{\hbox to 10pt{}\vfill}\vrule}\hrule}}}
\newcommand{\Ff}{{\mathbb F}}
\newtheorem{thm}{Theorem}[section]
\newtheorem{lem}[thm]{Lemma}
\newtheorem{remark}{Remark}
\date{}
\begin{document}

%\setlength\linenumbersep{1cm}
 %\maketitle
%\thispagestyle{empty}
%

\title{Multisequences with high joint nonlinear complexity from function fields \thanks{The work was supported by National Science Foundation of
China No. 61602342,  Natural Science Foundation of Tianjin  under grant No. 18JCQNJC70300,  NFSC 11701553,   the Science and
Technology Development Fund of Tianjin Education Commission
for Higher Education No. 2018KJ215, KYQD1817,  and the China Scholarship Council (No. 201809345010 and No. 201907760008), Key Laboratory of Applied Mathematics of Fujian Province University
(Putian University) (No. SX201904 and SX201804), NSFT  No.16JCYBJC42300,  NFSC No. 61872359, 61972456 and 61802281, the Science and
Technology Development Fund of Tianjin Education Commission
for Higher Education No.  2017KJ213.}
}

%Grants or other notes
%about the article that should go on the front page should be
%placed here. General acknowledgments should be placed at the end of the article.

\author{Yang Yan\thanks{Y. Yan is with the School of Information Technology and Engineering, Tianjin University of Technology and Education, Tianjin, 300387, Email: yanyangucas@126.com},
Qiuyan Wang\thanks{Q. Wang is with the School of Computer Science and
Technology, Tianjin Polytechnic University, Tianjin,
300387, China, and with the Provincial Key Laboratory of Applied Mathematics, Putian University, Putian, Fujian 351100, China. Email: wangyan198801@163.com},
Chenhuang Wu\thanks{C. Wu is with the Provincial Key Laboratory of Applied Mathematics, Putian University, ptuwch@163.com}
}
%\date{Received: date / Accepted: date}
% The correct dates will be entered by the editor
\maketitle

\let\thefootnote\relax\footnotetext{}

\begin{abstract}
Multisequences over finite fields play a pushing role in the applications that relate to parallelization, such as word-based stream ciphers and pseudorandom vector generation. It is interesting to study the complexity measures for multisequences. In this paper, we propose three constructions of multisequences over finite fields from rational function fields and Hermitian function fields. We also analyze the joint nonlinear complexity of these multisequences. Moreover, the length and dimension of these multisequences are flexible.

{\bf Keywods}: Sequence, multisequence, joint nonlinear complexity, function field
\end{abstract}

\section{Introduction}
The study of pseudorandom sequences is a hot research topic, due to their utilization in the generation of pseudorandom numbers and cryptography. The performance of a pseudorandom sequence is determined by complexity-theoretic and statistical requirements. In practical applications, as complexity-theoretic and statistical requirements are in a sense independent \cite{Nie1}, these two requirements are both important.

To assess the capability of a keystream generated by a stream cipher, one has to consider that replicating the entire keystream from a part of the keystream should be very hard. To this end, it is interesting to know how hard a pseudorandom sequence might be to replicate, which leads that many scholars investigate pseudorandom sequences from the complexity-theoretic standpoint. Several complexity measures for sequences are available in the reference therein. The most popular complexity measure is the linear complexity where only linear feedback shift registers are considered. A concise survey on the linear complexity has been provided in \cite{Win1} and the recent handbook article \cite{Meidl1}. However, a few effort has devoted to the complexity measure referring to feedback shift registers with feedback functions of higher algebraic degree, which is called the nonlinear complexity (see \cite{Luo, Nie2}). As a special type of the nonlinear complexity, the maximum-order complexity has attracted some attention due to Jansen \cite{J1,J2}. Basing on pattern counting, complexity measures for sequences were established, for instance, the Lempel-Ziv complexity (see \cite{Lem1} for the definition and \cite{Lem2} for cryptographic applications).

For applications that relate to parallelization, such as word-based stream ciphers and pseudorandom vector generation, multisequences over finite fields are indispensable (see \cite{M1,M2}). The complexity study of multisequences has focused on the joint linear complexity and $k$-error linear complexity \cite{Mul1,Mul2,Mul3,Mul4,Mul5,Mul6,Mul7,Meidl1,Xing1}. Recently, Meidl and Niederreiter \cite{Nonlinear} introduced the definition of the joint nonlinear complexity for multisequences (see Section \ref{S2}). In practice, a multisequence may have large joint linear complexity, but very small joint nonlinear complexity. Hence, we would like to construct multisequences with high joint nonlinear complexity. In fact, the design of multisequences with high joint nonlinear complexity is harder than that of multisequences with high joint linear complexity.

Algebraic function fields (or algebraic curves) over finite fields are powerful tools to construct a variety of sequences. For example, sequences with low correlations were proposed in \cite{Hu1,Xing2}; sequences and multisequences with large linear complexity were present in \cite{Xing1,Xing3,Xing4}; the authors \cite{Nie2,Luo} has constructed sequences with high nonlinear complexity.

The purpose of this paper is to construct multisequences with high joint nonlinear complexities. Using rational function fields and Hermitian function fields which contain automorphisms with large order, we propose three constructions of multisequences with flexible lengths and dimensions. Additionally, we give the lower bound on the joint nonlinear complexities for these multisequences. Comparing with the behavior of joint nonlinear complexities of random multisequences, these multisequences can be said to have high joint nonlinear complexity under certain conditions on their parameters.

This paper is organized as follows. Section \ref{S2} devotes to some definitions and results about the joint nonlinear complexity and function fields. In Section \ref{S3} and Section \ref{S4}, we propose three construction of multisequences and evaluate the lower bound of the joint nonlinear complexity. Section \ref{S5} concludes the paper.

\section{Preliminaries}\label{S2}

In this section, we  briefly recall some basic definitions and results about multisequences and function fields, which will be needed in our discussion. We begin with the background on the nonlinear complexity of a multisequence.

\subsection{Multisequences and joint nonlinear complexity}
Throughout this paper, let $q$ be a power of an arbitrary prime $p$ and $\Ff_q$ stand for the finite field with $q$ elements. We write $\Ff_q^*=\Ff_q\setminus\{0\}$. For any positive integer $u$, denote by $\Ff_q[x_1,\cdots,x_u]$ the ring of polynomials of $\Ff_q$ with the $u$ variables $x_1,\cdots,x_u$.

Assume that $\mathbf{s}=\{s(j)\}_{j=0}^{N-1}$ is a nonzero sequence of length $N$ over $\Ff_q$. We say that a polynomial $f\in\Ff_q[x_1,\cdots,x_u]$ generates the sequence $\mathbf{s}$ if
$$
s(j+u)=f(s(j),s(j+1),\cdots,s(j+u-1)),
$$
for any $j=0,1,\cdots,N-u-1$.

Suppose that $r$ is a positive integer. The $r$th-order nonlinear complexity $N_r(\mathbf{s})$ of $\mathbf{s}$ is the smallest integer $u\geq1$ such that there exists a polynomial $f\in\Ff_q[x_1,\cdots,x_u]$ of degree at most $r$ in each variable that generates $\mathbf{s}$. Furthermore, if $\mathbf{s}$ is the zero sequence, then the nonlinear complexity $N_r(\mathbf{s})$ is equal to $0$.

For an integer $M\geq1$, let $\mathcal{S}=\{\mathbf{s}_i=\{s_i(j)\}_{j=0}^{N-1}:i=1,2,\cdots,M\}$ be a set of $M$ nonzero sequences of length $N$ over $\Ff_q$. Then $\mathcal{S}$ is called a multisequence of dimension $M$ over $\Ff_q$. The $r$th-order joint nonlinear complexity $N_r(\mathcal{S})$ of $\mathcal{S}$ is defined to be the smallest integer $u\geq1$ such that there exists a polynomial $f\in\Ff_q[x_1,\cdots,x_u]$ of degree at most $r$ in each variable that generates all $M$ sequences in $\mathcal{S}$ simultaneously. Moreover, $N_r(\mathcal{S})$ is set to be $0$ if $\mathbf{s}_i$ is the zero sequence for any $1\leq i \leq M$ and $N_r(\mathcal{S})$ is defined to be $N$ if there is no such polynomial generating the $N$ terms of each sequence in $\mathcal{S}$ simultaneously.

According to the definition of the joint nonlinear complexity, we always have $0\leq N_r(\mathcal{S})\leq N$. As point out in \cite{Nonlinear}, it suffices to consider the case that $1\leq r\leq q-1$ in the definition of $N_r(\mathcal{S})$. When $r\geq q-1$, all joint nonlinear complexities of a certain $\mathcal{S}$ are equal to $N_{q-1}(\mathcal{S})$. If $r=q-1$ and the set $\mathcal{S}$ contains only one sequence, i.e., $M=1$, the nonlinear complexity $N_{q-1}(\mathcal{S})$ is equal to the maximum-order complexity introduced by Jansen \cite{J1,J2}. For $M>1$ and $r=q-1$, we may term $N_{q-1}(\mathcal{S})$ the joint maximum-order complexity of $\mathcal{S}$. The definition of \cite[Definition 1]{J2} may regard as a previous notion of the joint maximum-order complexity.

For a set $\mathcal{S}=\{\mathbf{s}_i=\{s_i(j)\}_{j=0}^{\infty}:i=1,2,\cdots,M\}$ of infinite sequences over $\Ff_q$, we define the joint nonlinear complexity of $\mathcal{S}$ by $N_r(\mathcal{S},n)=N_r(\mathcal{S}_n)$, where $\mathcal{S}_n=\{\mathbf{s}_i=\{s_i(j)\}_{j=0}^{n-1}:i=1,2,\cdots,M\}$ and $n$ is a positive integer.

\subsection{Some background on function fields}

A function field $F$ over $\Ff_q$ is an extension field of $\Ff_q$ such that $F$ is a finite extension of $\Ff_q(x)$ for some element $x\in F$ which is transcendental over $\Ff_q$. In the following of this subsection, we always suppose that $\Ff_q$ is the full constant field of $F$.

For a discrete valuation $v$ which maps $F$ to $\mathbb{Z}\cup \{\infty\}$, define a local ring of $F$ by $\mathcal{O}=\{z\in F: v(z)\geq 0\}$ and its unique maximal ideal $P$ is termed a place of $F$. Denoted by $v_P$ and $\mathcal{O}_P$ the discrete valuation and the local ring associated with $P$, respectively. The residue class field $\mathcal{O}_P/P$ is a finite extension of $\Ff_q$ and the extension degree is called the degree of $P$, denoted by ${\rm deg}(P)$. Furthermore, a place $P$ is said to be a rational place if ${\rm deg}(P)=1$.

Assume that $\mathbb{P}_F$ is the set of all places of $F$. Let $S$ be a finite subset of $\mathbb{P}_F$. A divisor $D$ of $F$ is a formal sum
$$
D=\sum_{P\in S}m_PP,
$$
where $m_P$ is an integer for any $P\in S$. Define the degree of $D$ by
$$
{\rm deg}(D):=\sum_{P\in S}m_P{\rm deg}(P).
$$
Let $z$ be a nonzero function of $F$. Then the zero divisor and the pole divisor of $z$ are defined by
$$
(z)_0:=\sum_{P\in \mathbb{P}_F, v_P(z)>0 }v_P(z)P,
$$
and
$$
(z)_\infty:=-\sum_{P\in \mathbb{P}_F, v_P(z)<0 }v_P(z)P,
$$
respectively. Clearly, the principal divisor $(z)=(z)_0-(z)_\infty$. The degree of $(z)$ is $0$ due to the fact that ${\rm deg}((z)_0)={\rm deg}((z)_\infty)$ \cite[Threorem 1.4.11]{AG}.

For a divisor $D$ of $F$, the Riemann-Roch space is formed by
$$
\mathcal{L}(D)=\{z\in F\setminus \{0\} : (z)+D \geq 0\}\cup\{0\}.
$$
It is well known that $\mathcal{L}(D)$ is a finite dimensional space over $\Ff_q$. Let ${\rm dim}_{\Ff_q}\mathcal{L}(D)$ stand for the dimension of $\mathcal{L}(D)$. From the Riemann-Roch Theorem \cite{AG}, we obtain
$$
{\rm dim}_{\Ff_q}\mathcal{L}(D)\geq {\rm deg}(D)+1-g,
$$
where $g$ is the genus of $F$. In addition, the equality holds if ${\rm deg}(D)\geq 2g-1$.

Let $\varphi$ be an automorphism of $F$ which preserves all elements of $\Ff_q$, namely, $\varphi(a)=a$ for any $a\in\Ff_q$. All such automorphisms form a group of automorphisms of $F$ over $\Ff_q$ that is denoted by
${\rm Aut}(F/\Ff_q)$. The following lemma provides some basic properties on the automorphisms of $F$.
\begin{lem}\label{lem1}\cite{AG}
Assume that $P$ is a place of $F$ and $z$ is a function of $F$. For any $\varphi\in{\rm Aut}(F/\Ff_q)$, we have

1) $\varphi(P)$ is still a place of $F$ and ${\rm deg}(\varphi(P))={\rm deg}(P)$;

2) $v_{\varphi(P)}(\varphi(z))=v_P(z)$;

3) $\varphi(z)(\varphi(P))=z(P)$ if $v_P(z)\geq 0$.

\end{lem}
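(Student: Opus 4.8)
The plan is to use the single structural fact that $\varphi$ is an $\Ff_q$-linear automorphism of the field $F$, and hence transports the local ring, maximal ideal, uniformizer, and residue field at $P$ isomorphically onto the corresponding objects at $\varphi(P)$; every assertion then follows by tracking what this isomorphism does, and nothing beyond the basic theory of places recalled above is needed.

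First I would verify part 1). Set $\mathcal{O}':=\varphi(\mathcal{O}_P)$. Since $\varphi$ is a bijective ring homomorphism fixing $\Ff_q$ pointwise, we have $\Ff_q\subsetneq\mathcal{O}'\subsetneq F$, and for any $w\in F^{*}$, writing $w=\varphi(z)$ with $z=\varphi^{-1}(w)$ and using that $z\in\mathcal{O}_P$ or $z^{-1}\in\mathcal{O}_P$ gives $w\in\mathcal{O}'$ or $w^{-1}\in\mathcal{O}'$; thus $\mathcal{O}'$ is a valuation ring of $F/\Ff_q$. A ring isomorphism carries the unique maximal ideal of a local ring onto the unique maximal ideal of its image, so the maximal ideal of $\mathcal{O}'$ is precisely $\varphi(P)$, which shows $\varphi(P)$ is a place of $F$. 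The restriction of $\varphi$ to $\mathcal{O}_P$ then descends to a field isomorphism $\bar\varphi:\mathcal{O}_P/P\to\mathcal{O}_{\varphi(P)}/\varphi(P)$ that is the identity on $\Ff_q$, whence $[\mathcal{O}_{\varphi(P)}/\varphi(P):\Ff_q]=[\mathcal{O}_P/P:\Ff_q]$, i.e.\ $\mathrm{deg}(\varphi(P))=\mathrm{deg}(P)$.

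Next I would handle part 2). Choosing a uniformizer $t$ at $P$, so that $P=t\mathcal{O}_P$, and applying $\varphi$ gives $\varphi(P)=\varphi(t)\mathcal{O}_{\varphi(P)}$, so $\varphi(t)$ is a uniformizer at $\varphi(P)$ and $v_{\varphi(P)}(\varphi(t))=1$. For nonzero $z$ with $v_P(z)=n$, write $z=t^{n}u$ with $u$ a unit of $\mathcal{O}_P$; then $\varphi(z)=\varphi(t)^{n}\varphi(u)$ with $\varphi(u)$ a unit of $\mathcal{O}_{\varphi(P)}$, giving $v_{\varphi(P)}(\varphi(z))=n=v_P(z)$, while the case $z=0$ is immediate. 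An alternative I could use instead: the map $z\mapsto v_{\varphi(P)}(\varphi(z))$ is a normalized discrete valuation of $F$ whose valuation ring is $\varphi^{-1}(\mathcal{O}_{\varphi(P)})=\mathcal{O}_P$, hence it coincides with $v_P$. For part 3), if $v_P(z)\ge 0$ then $z\in\mathcal{O}_P$, so the isomorphism $\bar\varphi$ from the first step sends the residue class $z(P)$ to the residue class of $\varphi(z)$ at $\varphi(P)$; since $\bar\varphi$ fixes $\Ff_q$, this reads $\varphi(z)(\varphi(P))=z(P)$ --- an honest equality in $\Ff_q$ when $P$ is rational, and the same identity under the canonical identification $\bar\varphi$ in general.

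Since this is a standard foundational fact (it is contained in \cite{AG}), I do not anticipate a genuine obstacle; the only step requiring any care is checking that $\varphi(\mathcal{O}_P)$ is again a valuation ring of $F/\Ff_q$ and observing that it is precisely the $\Ff_q$-linearity of $\varphi$ that forces the residue-field dimensions, and hence the degrees and the residues, to be preserved.
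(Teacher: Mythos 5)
Your proof is correct. Note that the paper itself gives no argument for this lemma --- it is stated with a citation to Stichtenoth's book \cite{AG} and used as a black box --- so there is no in-paper proof to compare against; your argument is the standard one (transport of the valuation ring, its maximal ideal, a uniformizer, and the residue field under the $\Ff_q$-automorphism), and you are right to flag the small subtlety in part 3) that the stated equality $\varphi(z)(\varphi(P))=z(P)$ is a literal identity in $\Ff_q$ only for rational places (the only case the paper actually uses) and otherwise holds under the canonical identification of residue fields.
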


For more details on function fields, we refer the reader to the book \cite{AG}.

\subsection{Rational function fields}

Let $\Ff_q(x)$ be the rational function field over $\Ff_q$. As is known, $\Ff_q(x)$ has the genus $g=0$ and $q+1$ rational places. For each $\varphi\in{\rm Aut}(\Ff_q(x)/\Ff_q)$, there exist $a,b,c,d\in\Ff_q$ such that $\varphi(x)=\frac{ax+b}{cx+d}$ and $ad\neq bc$. It is easy to check that the order of ${\rm Aut}(\Ff_q(x)/\Ff_q)$ is $q^3-q$. Let $P_\infty$ be the unique pole of $x$. For every $\omega\in\Ff_q$, there is a unique rational place $P_{\omega}$ of $\Ff_q(x)$ with $x(P_{\omega})=\omega$. Then, we have the following lemma.
\begin{lem}\label{lem2}
Let $d>0$ be an integer with $d|(q-1)$ and $\alpha$ a nonzero element of $\Ff_q^*$ of order $d$. Assume that $\varphi$ is an automorphism of ${\rm Aut}(\Ff_q(x)/\Ff_q)$ such that $\varphi(x)=\alpha x$. Then,

1) $G=\{\varphi^i:i=0,1,\cdots,d-1\}$ is a cyclic group of order $d$

2) $\varphi(P_\infty)=P_\infty$ and $\varphi(P_0)=P_0$;

3) If $P$ is a rational place of $\Ff_q(x)$ such that $P\neq P_\infty, P_0$, then the cardinality of the set $\{\varphi^i(P):i=0,1,\cdots,d-1\}$ is $d$. Furthermore, the action of $G$ on all rational places of $\Ff_q(x)$ gives rise to $2+\frac{q-1}{d}$ among which two orbits contain one element and each of the other orbits contains $d$ element.
\end{lem}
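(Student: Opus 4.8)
The plan is to verify the three assertions about the automorphism $\varphi: x \mapsto \alpha x$ directly, using the structure of $\mathrm{Aut}(\Ff_q(x)/\Ff_q)$ as fractional linear transformations together with Lemma \ref{lem1}.

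\textbf{Part 1.} First I would observe that $\varphi^i(x) = \alpha^i x$ for every $i \geq 0$: this follows by an easy induction, since $\varphi$ fixes $\Ff_q$ pointwise and hence $\varphi^{i+1}(x) = \varphi(\alpha^i x) = \alpha^i \varphi(x) = \alpha^{i+1} x$. Since $\alpha$ has order $d$ in $\Ff_q^*$, the map $\varphi^i$ is the identity on $\Ff_q(x)$ exactly when $\alpha^i = 1$, i.e.\ when $d \mid i$. Therefore $\varphi$ itself has order $d$ in $\mathrm{Aut}(\Ff_q(x)/\Ff_q)$, and $G = \langle \varphi \rangle = \{\varphi^i : 0 \leq i \leq d-1\}$ is cyclic of order $d$.

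\textbf{Part 2.} For the fixed places, note that $\varphi$ maps rational places to rational places by Lemma \ref{lem1}, part 1), and that $\varphi(P_\omega)$ is characterized by the value taken on it by $x$. By Lemma \ref{lem1}, part 3), applied to $z = x$ (which is regular and equal to $\omega$ at $P_\omega$ when $\omega \neq \infty$), we get $x(\varphi(P_\omega)) = $ hmm, let me be careful — one should use $\varphi^{-1}$ or reindex. Concretely, $x(\varphi(P_0))$: writing $\psi = \varphi^{-1}$, part 3) gives $\varphi(x)(\varphi(P_0)) = x(P_0) = 0$, and $\varphi(x) = \alpha x$, so $\alpha \cdot x(\varphi(P_0)) = 0$, hence $x(\varphi(P_0)) = 0$, which forces $\varphi(P_0) = P_0$. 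For $P_\infty$, the unique pole of $x$: since $v_{P_\infty}(x) = -1 < 0$, part 2) gives $v_{\varphi(P_\infty)}(\varphi(x)) = v_{\varphi(P_\infty)}(\alpha x) = v_{P_\infty}(x) = -1$, so $\varphi(P_\infty)$ is the unique pole of $\alpha x$, which is $P_\infty$ itself. Thus $\varphi(P_\infty) = P_\infty$ and $\varphi(P_0) = P_0$.

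\textbf{Part 3.} Now let $P = P_\omega$ with $\omega \in \Ff_q^*$. The same computation as above shows $\varphi^i(P_\omega) = P_{\alpha^i \omega}$ for all $i$: indeed $x(\varphi(P_\omega))$ satisfies $\alpha \cdot x(\varphi(P_\omega)) = \omega$ by part 3), wait — I need to reconcile the direction once more and then it reads $x(\varphi^i(P_\omega)) = \alpha^i \omega$, so $\varphi^i(P_\omega) = P_{\alpha^i\omega}$. Since $\omega \neq 0$ and $\alpha$ has order $d$, the elements $\omega, \alpha\omega, \dots, \alpha^{d-1}\omega$ are pairwise distinct, so the orbit $\{\varphi^i(P_\omega)\}$ has exactly $d$ elements. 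Finally, to count orbits: the $q+1$ rational places are $P_\infty, P_0$, and $P_\omega$ for $\omega \in \Ff_q^*$ (which is $q-1$ places). The first two are fixed points, giving two singleton orbits; the remaining $q-1$ rational places are partitioned into orbits each of size $d$ under the action of the cyclic group $G$ (orbit sizes divide $|G| = d$, and we just showed each such orbit has size exactly $d$), giving $(q-1)/d$ orbits of size $d$. Hence $G$ acts with $2 + \frac{q-1}{d}$ orbits total, as claimed.

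The only slightly delicate point is keeping the direction of the action straight when applying Lemma \ref{lem1}, part 3) — that lemma relates $\varphi(z)$ evaluated at $\varphi(P)$ to $z$ evaluated at $P$, so to compute $x(\varphi(P_\omega))$ one applies it with the pair $(\varphi(x), \varphi(P_\omega))$ arising from $(x, P_\omega)$, or equivalently replaces $\varphi$ by $\varphi^{-1}$; either way the bookkeeping is routine and the conclusion $\varphi^i(P_\omega) = P_{\alpha^i\omega}$ stands. Everything else is elementary field theory and orbit counting.
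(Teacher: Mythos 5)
Your proof is correct; the paper itself gives no argument for this lemma (it states ``The proof of this lemma is obvious and we omit it''), and your verification via $\varphi^i(x)=\alpha^i x$, Lemma \ref{lem1}, and elementary orbit counting is exactly the intended routine argument. The one point you waffle on resolves harmlessly: Lemma \ref{lem1}(3) actually gives $\alpha\cdot x(\varphi(P_\omega))=\omega$, hence $\varphi(P_\omega)=P_{\alpha^{-1}\omega}$ rather than $P_{\alpha\omega}$, but since $\alpha^{-1}$ also has order $d$ the orbit is the same set $\{P_{\alpha^{j}\omega}: 0\leq j\leq d-1\}$ of cardinality $d$, so all conclusions stand.
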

The proof of this lemma is obvious and we omit it.

\subsection{Hermitian function fields}

The Hermitian function field over $\Ff_{q^2}$ is given by
$$
H=\Ff_{q^2}(x,y)\ \ \ \ {\rm and}\ \ \ y^q+y=x^{q+1},
$$
where $x,y$ are two variables over $\Ff_{q^2}$. The genus of $H$ is $g=\frac{q^2-q}{2}$. There are altogether $q^3+1$ rational places of $H$, namely the common pole $Q_{\infty}$ of $x$ and $y$ and $P_{a,b}$ with $x(P_{a,b})=a$ and $y(P_{a,b})=b$, where $a,b$ satisfy the equation $b^q+b=a^{q+1}$. The Hermitian function field is said to be a maximal function field since it meets the Hasse-Weil bound $1+q^2+2gq=q^3+1$.

The automorphism group $\mathcal{G}={\rm Aut}(H/\Ff_{q^2})$ has been completely determined in \cite{auto1,auto2} and it is isomorphic to the projective unitary group $PGU(3,q^2)$ of order $q^3(q^2-1)(q^3+1)$. Let $P_{\infty}$ be the unique pole of $x$. Then $\mathcal{G}_{\infty}$ in which all automorphisms fix $P_{\infty}$ is a subgroup of $\mathcal{G}$. Precisely speaking,
\begin{eqnarray*}
\mathcal{G}_{\infty}&=&\{\varphi\in\mathcal{G}:\varphi(P_{\infty})=P_{\infty}\}\\
&=&\{\varphi_{\alpha,\beta,\gamma}:\alpha\in\Ff_{q^2}^*,\beta,\gamma\in\Ff_{q^2},\gamma^q+\gamma=\beta\},
\end{eqnarray*}
where $\varphi_{\alpha,\beta,\gamma}$ stands for the automorphism
$$
\varphi_{\alpha,\beta,\gamma}(x)=\alpha x+\beta,\ \ \ \varphi_{\alpha,\beta,\gamma}(y)=\alpha^{q+1}x+\alpha \beta^qx+\gamma.
$$
It is easy to see that the order of $\mathcal{G}_{\infty}$ is $q^3(q^2-1)$ since $\alpha$ is arbitrary in $\Ff_{q^2}^*$ and $\gamma$ has $q$ solutions for any $\beta\in\Ff_{q^2}$.

Let $\delta$ be a primitive element of $\Ff_{q^2}$. Then, $\sigma=\varphi_{\delta,0,0}$ is the automorphism with order $q^2-1$ and it generates a cyclic group $R$ of order $q^2-1$. Obviously, $R$ is a subgroup of $\mathcal{G}_{\infty}$. The following result plays a pushing role in the design of multisequences with high joint nonlinear complexity.
\begin{lem}\cite{Xing1}\label{lem3}
Let $R$ be a cyclic group defined as above. Then the action of $R$ on all rational places $\neq P_\infty$ of $H$ gives rise to $q+2$ orbits among which one orbit contains only one element, one orbit contains $q-1$ elements and each of the rest orbits contains $q^2-1$ elements.
\end{lem}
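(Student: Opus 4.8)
The plan is to make the action of $\sigma=\varphi_{\delta,0,0}$ on places completely explicit and then reduce everything to a counting argument. Since $\sigma\in\mathcal G_\infty$ fixes $P_\infty$, it permutes the remaining $q^3$ rational places, which are precisely the $P_{a,b}$ with $a,b\in\Ff_{q^2}$ and $b^q+b=a^{q+1}$ (for each of the $q^2$ choices of $a$ there are exactly $q$ admissible $b$, because $\Tr_{\Ff_{q^2}/\Ff_q}$ maps onto $\Ff_q$ and $a^{q+1}\in\Ff_q$). Applying Lemma \ref{lem1}(3) with $z=x$ and $z=y$, together with $\sigma(x)=\delta x$ and $\sigma(y)=\delta^{q+1}y$, one finds $\sigma(P_{a,b})=P_{\delta^{-1}a,\ \delta^{-(q+1)}b}$; since $\delta^{-1}$ is again a generator of $\Ff_{q^2}^*$, for orbit-counting purposes we may just as well think of $\sigma$ as acting by $(a,b)\mapsto(\delta a,\delta^{q+1}b)$.

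Next I would split the $q^3$ places into those with $a\neq0$ and those with $a=0$. If $a\neq 0$, then $\sigma^i(P_{a,b})=P_{\delta^i a,\ \delta^{i(q+1)}b}$ equals $P_{a,b}$ only when $\delta^i a=a$, i.e.\ when $(q^2-1)\mid i$; hence the $R$-orbit of such a place has full length $q^2-1$. There are $q^3-q$ places with $a\neq0$, so they fall into $(q^3-q)/(q^2-1)=q$ orbits, each of cardinality $q^2-1$.

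It then remains to treat the $q$ places with $a=0$; these satisfy $b^q+b=0$, i.e.\ $\Tr_{\Ff_{q^2}/\Ff_q}(b)=0$, and $\sigma$ acts on them by $b\mapsto\delta^{q+1}b$. The place $P_{0,0}$ is a fixed point, giving an orbit of size $1$. For the other $q-1$ values of $b$, which are the nonzero elements of the one-dimensional $\Ff_q$-subspace $\ker\Tr_{\Ff_{q^2}/\Ff_q}$, I would observe that $\delta^{q+1}=N_{\Ff_{q^2}/\Ff_q}(\delta)\in\Ff_q^*$ has order $(q^2-1)/\gcd(q+1,q^2-1)=q-1$, hence generates $\Ff_q^*$; multiplication by it preserves $\ker\Tr$ (as $\Ff_q$-scalars always do) and permutes its $q-1$ nonzero elements in a single cycle. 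So these places form one orbit of size $q-1$. Adding up, we get $1+1+q=q+2$ orbits with the asserted sizes, and as a sanity check $1+(q-1)+q(q^2-1)=q^3$.

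The only genuinely delicate step is extracting the induced action of $\sigma$ on the coordinate pairs $(a,b)$ from Lemma \ref{lem1} — in particular recognizing that the second coordinate is scaled by the norm $\delta^{q+1}$ rather than by $\delta$ itself, which is what forces the exceptional orbit of size $q-1$; after that the argument is a routine orbit–stabilizer count. One should also record the easy facts that $P_{0,0}$ really is a rational place ($0^q+0=0^{q+1}$) and that $\ker\Tr_{\Ff_{q^2}/\Ff_q}$ has $\Ff_q$-dimension $1$, both of which are immediate.
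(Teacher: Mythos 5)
Your proof is correct. Note, however, that the paper does not prove this lemma at all: it is quoted verbatim from Xing and Ding \cite{Xing1}, so there is no internal argument to compare against. Your self-contained computation is a legitimate replacement for the citation. The key steps all check out: the transition from Lemma \ref{lem1}(3) to the coordinate action $\sigma(P_{a,b})=P_{\delta^{-1}a,\,\delta^{-(q+1)}b}$ is right (and passing to the inverse generator is harmless since $\langle\sigma\rangle=\langle\sigma^{-1}\rangle$ have the same orbits); the stabilizer argument for $a\neq 0$ gives $q$ free orbits of size $q^2-1$; and on the trace-zero line $\{b:b^q+b=0\}$ the scalar $\delta^{q+1}=N_{\Ff_{q^2}/\Ff_q}(\delta)$ has order exactly $(q^2-1)/\gcd(q+1,q^2-1)=q-1$ and so cycles the $q-1$ nonzero points in a single orbit, with $P_{0,0}$ fixed. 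The count $1+(q-1)+q(q^2-1)=q^3$ confirms nothing is missed. One small point worth flagging: the paper's displayed formula for $\varphi_{\alpha,\beta,\gamma}(y)$ contains a typo (the leading term should be $\alpha^{q+1}y$, not $\alpha^{q+1}x$); you have implicitly used the corrected form $\sigma(y)=\delta^{q+1}y$, which is the only one compatible with $\sigma$ being an automorphism of $H$ preserving $y^q+y=x^{q+1}$, so your reading is the right one.
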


Assume that $\theta\in\Ff_{q^2}$ is a nonzero element such that $\theta^q+\theta=0$. Then the automorphism $\phi=\varphi_{1,0,\theta}$ has the order $p$ and the cyclic group $G$ formed by $\phi$ is of order $p$. The last auxiliary result which will be used is the following.
\begin{lem}\cite{Xing1}\label{lem4}
Suppose that $G$ is a cyclic group defined as above. Then $G$ acts on all rational places $\neq P_\infty$ of $H$ giving rise to $q^3/p$ orbits and every orbit contains $p$ distinct elements.
\end{lem}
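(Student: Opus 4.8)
The plan is to compute explicitly how the generator $\phi=\varphi_{1,0,\theta}$ moves a rational place $P_{a,b}$ and then count orbits by a size argument. First I would record the action of $\phi$ on the generators $x,y$ of $H$: from the formulas for $\varphi_{\alpha,\beta,\gamma}$ with $\alpha=1$, $\beta=0$, $\gamma=\theta$ one gets $\phi(x)=x$ and $\phi(y)=y+\theta$. Since $\theta\neq 0$ and $\mathrm{char}(\Ff_{q^2})=p$, the iterates satisfy $\phi^i(y)=y+i\theta$, so $\phi$ has order exactly $p$, consistent with $|G|=p$.

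Next I would use Lemma \ref{lem1}(3) to transport this to an action on places. Fix a rational place $P_{a,b}$ with $P_{a,b}\neq P_\infty$, so $x(P_{a,b})=a$, $y(P_{a,b})=b$, $b^q+b=a^{q+1}$, and $v_{P_{a,b}}(x)\geq 0$, $v_{P_{a,b}}(y)\geq 0$. Applying $\varphi(z)(\varphi(P_{a,b}))=z(P_{a,b})$ with $z=x$ and $z=y$ gives $x(\phi(P_{a,b}))=a$ and $y(\phi(P_{a,b}))=b-\theta$, hence $\phi(P_{a,b})=P_{a,b-\theta}$. One checks $P_{a,b-\theta}$ is indeed a rational place: $(b-\theta)^q+(b-\theta)=(b^q+b)-(\theta^q+\theta)=a^{q+1}-0=a^{q+1}$. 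Iterating, the $G$-orbit of $P_{a,b}$ is $\{P_{a,b-i\theta}:i=0,1,\dots,p-1\}$.

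Then I would bound the orbit sizes. Since $\phi\in\mathcal{G}_\infty$ it fixes $P_\infty$, so $G$ acts on the remaining $q^3$ rational places $P_{a,b}$. Every such orbit has size dividing $|G|=p$, hence equals $1$ or $p$; an orbit of size $1$ would force $b-\theta=b$, i.e.\ $\theta=0$, a contradiction. Thus every orbit among the rational places $\neq P_\infty$ has exactly $p$ elements, and these $p$ elements are pairwise distinct because $i\theta=j\theta$ with $0\le i,j\le p-1$ forces $p\mid(i-j)$, i.e.\ $i=j$.

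Finally I would conclude by counting: the $q^3$ rational places $\neq P_\infty$ are partitioned into $G$-orbits all of size $p$ (note $p\mid q^3$, as $q$ is a power of $p$), so the number of orbits is $q^3/p$, as claimed. I do not anticipate any genuine obstacle; the only delicate point is getting the action of $\phi$ on $x,y$ right and correctly pushing it forward to places through Lemma \ref{lem1}, after which the orbit count is immediate.
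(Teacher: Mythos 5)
Your proof is correct. The paper itself gives no argument for this lemma --- it is imported verbatim from \cite{Xing1} --- so there is nothing internal to compare against; your write-up supplies a complete, self-contained verification. The key steps all check out: with $\alpha=1$, $\beta=0$, $\gamma=\theta$ one indeed has $\phi(x)=x$ and $\phi(y)=y+\theta$ (note the paper's displayed formula for $\varphi_{\alpha,\beta,\gamma}(y)$ contains a typo, $\alpha^{q+1}x$ where it should read $\alpha^{q+1}y$; you have implicitly used the corrected version, which is the right call since otherwise $\varphi_{1,0,0}$ would not be the identity). The transport to places via Lemma \ref{lem1}(3) correctly yields $\phi(P_{a,b})=P_{a,b-\theta}$, the rationality check $(b-\theta)^q+(b-\theta)=a^{q+1}$ uses $\theta^q+\theta=0$ exactly as needed, and the orbit-size dichotomy ($1$ or $p$, with size $1$ excluded since $\theta\neq 0$) together with $|\{\text{rational places}\neq P_\infty\}|=q^3$ gives the count $q^3/p$. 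No gaps.
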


\section{Multisequences from rational function fields}\label{S3}

In this section, employing some results on rational function fields, we propose a construction of multisequences and determine the lower bound of the joint nonlinear complexity.

Let $F=\Ff_q(x)$ be the rational function field over $\Ff_q$. According to Lemma \ref{lem2}, under the action of $G$ on all the rational places of $F$, there are $\frac{q-1}{d}$ orbits among which each orbits contains $d$ distinct rational places. Let $d>1$ and $3 \leq \frac{q-1}{d}$. Label all the elements of these $\frac{q-1}{d}$ orbits
\begin{eqnarray*}
&&P,\varphi(P),\cdots,\varphi^{d-1}(P);\\
&&Q_{1,1},\varphi(Q_{1,1}),\cdots,\varphi^{d-1}(Q_{1,1});\\
&&Q_{1,2},\varphi(Q_{1,2}),\cdots,\varphi^{d-1}(Q_{1,2});\\
&&\ \ \ \ \ \ \ \ \ \ \ \ \ \ \ \ \ \vdots\\
&&Q_{1,M},\varphi(Q_{1,M}),\cdots,\varphi^{d-1}(Q_{1,M});\\
&&\ \ \ \ \ \ \ \ \ \ \ \ \ \ \ \ \ \vdots\\
&&Q_{N,M},\varphi(Q_{N,M}),\cdots,\varphi^{d-1}(Q_{N,M}),\\
\end{eqnarray*}
where $NM=\frac{q-1}{d}-1$ and $N\geq 1$, $M\geq 1$. Assume that $z$ is a function of $F$ such that $(z)_\infty=P$. For $1\leq i\leq N$, we define a sequence of length $dM$ as
$$
\mathbf{s}_i=\{s_i(j)\}_{j=0}^{dM-1}=(z(Q_{i,1}),z(\varphi(Q_{i,1}))\cdots,z(\varphi^{d-1} (Q_{i,1})),\cdots,z(Q_{i,M}),\cdots,z(\varphi^{d-1} (Q_{i,M}))),
$$
and then define a set by
\begin{equation}\label{con1}
\mathcal{S}=\{\mathbf{s}_i:i=1,2,\cdots,N\}.
\end{equation}
Thus, $\mathcal{S}$ is a multisequence over $\Ff_q$ of dimension $N$. Additionally, $\mathcal{S}$ is periodic with least period $d$ if $M=1$.

\begin{thm}\label{thm1}
Let $d$ be an integer and $q$ a prime power such that $d>1$, $d|(q-1)$ and $3 \leq \frac{q-1}{d}$. Assume that $N,M$ are positive integers with $NM=\frac{q-1}{d}-1$. Let $\mathcal{S}$ be the multisequence defined by (\ref{con1}). Then, for any integer $r$ with $1\leq r\leq q-1$, we have
$$
N_r(\mathcal{S}_n)\geq\left\{
            \begin{array}{ll}
              \frac{nN-1}{N+r},& \hbox{if\ $1<n\leq d$,}\\
              \frac{dN\lfloor\frac{n}{d}\rfloor-1}{N\lfloor\frac{n}{d}\rfloor+r},& \hbox{if\ $d\leq n\leq dM$.}
            \end{array}
          \right.
$$
\end{thm}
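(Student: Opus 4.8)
The plan is to bound $N_r(\mathcal S_n)$ from below by a counting argument on the number of distinct "windows" appearing in the sequences of $\mathcal S_n$, exploiting the fact that a generating polynomial of small complexity $u$ can produce only few distinct length-$u$ windows, while the function-field structure forces many distinct windows. Suppose $f\in\Ff_q[x_1,\dots,x_u]$ of degree at most $r$ in each variable generates every sequence $\mathbf s_i$ in $\mathcal S_n$ simultaneously, so $u=N_r(\mathcal S_n)$. The key observation is that, because $(z)_\infty=P$ and $P$ lies in an orbit distinct from all the $\varphi^j(Q_{i,k})$, Lemma~\ref{lem1}(3) shows that each entry $s_i(j)=z(\varphi^?(Q_{i,?}))$ is a well-defined element of $\Ff_q$, and the values $z(\varphi^j(Q_{i,k}))$ for the terms involved in a window are governed by the orbit structure. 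First I would isolate the combinatorial lemma underlying the Jansen-type bound: if a sequence of length $n$ is generated by such an $f$ with window length $u$, then the number of distinct length-$u$ subwords is at most $u+r\cdot(\text{something})$; more precisely, one counts how the recursion $s(j+u)=f(s(j),\dots,s(j+u-1))$ collapses windows, giving roughly $n-u \le$ (number of distinct windows) and each distinct window value of the first $u-1$ coordinates can extend in at most $r+1$ ways, which after the standard manipulation yields $n \le (u+1)(r+1)$ type inequalities — I would track the exact constants to match $\frac{nN-1}{N+r}$.

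The heart of the argument is showing that $\mathcal S_n$ genuinely contains many distinct windows, and this is where the function field enters. I would argue that if two length-$u$ windows (possibly drawn from different sequences $\mathbf s_i,\mathbf s_{i'}$ in the multisequence, since $f$ generates all of them) coincide, then via Lemma~\ref{lem1}(2),(3) and the hypothesis $(z)_\infty = P$ one gets an equality of the form $z\circ\psi = z\circ\psi'$ on a set of rational places, where $\psi,\psi'$ are compositions of $\varphi$; comparing pole divisors — $z\circ\psi$ has its unique pole at $\psi^{-1}(P)$ — forces $\psi=\psi'$ unless the window lies entirely in a "safe" region. This is essentially the standard trick: a function with a single simple pole is determined, up to the group action, by enough of its values, so collisions among windows are rare. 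The counting then says: across the $N$ sequences each of length $n$, we have $N(n-u)$ window-transitions but the number of distinct window \emph{states} (first $u-1$ coordinates together with which residue class mod $d$ the starting index lies in) is small, of order $N\lfloor n/d\rfloor + \text{const}$ in the range $d\le n\le dM$ (because the period-$d$ structure when $M>1$ reuses the orbit of $P$-data), and each state admits at most $r+1$ successors. Solving the resulting inequality $N(n-u) \le (\text{states})\cdot(r+1) - 1$ or the cleaner form $nN - 1 \le (u+r)(\text{something})$ gives the stated bounds, with the two cases $1<n\le d$ and $d\le n\le dM$ arising from whether the window fits inside a single period or spans the repeated blocks.

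I would organize the write-up as: (i) recall/prove the window-counting inequality for a single generating polynomial of degree $\le r$ in each variable, phrased for a collection of $N$ sequences generated simultaneously; (ii) prove the "distinctness" lemma, that $z(\varphi^{j_1}(Q_{i_1,k_1})) = z(\varphi^{j_2}(Q_{i_2,k_2}))$ together with enough neighboring equalities forces the indices to agree, using Lemma~\ref{lem1} and the pole-divisor comparison for $z$; (iii) assemble (i) and (ii), carefully counting distinct window-states in each of the two ranges of $n$ and solving for $u$. The main obstacle I anticipate is step (ii) made fully rigorous: a priori a window of length $u$ in $\mathbf s_i$ need not stay within one orbit of size $d$ when $u>d$, so windows mix data from several $Q_{i,k}$'s; I will need to show that even such "mixed" windows are determined by the function $z$ via its unique pole, which means carefully setting up the correspondence between a window and a tuple of places and checking that distinct tuples give distinct value-tuples except in a controlled number of degenerate cases (those near the orbit of $P$, whose contribution accounts for the "$-1$" in the numerators). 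A secondary technical point is handling $r$ in the range $1\le r\le q-1$ uniformly; since the bound only uses "degree $\le r$ in each variable $\Rightarrow$ at most $r+1$ values of the last coordinate given the first $u-1$", this should go through verbatim for all such $r$.
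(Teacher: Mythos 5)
Your proposal rests on a combinatorial ``window-counting'' inequality that is false, so the approach cannot be repaired as stated. If $f\in\Ff_q[x_1,\dots,x_u]$ generates a sequence via $s(j+u)=f(s(j),\dots,s(j+u-1))$, then each full length-$u$ window has \emph{exactly one} successor --- the degree bound $r$ in each variable does not limit ``how many ways a state can extend,'' and there is no inequality of the shape $n\le (u+1)(r+1)$. Already for $r=1$ a linear recurring sequence of order $u$ can have period $q^u-1$, with all of its length-$u$ windows distinct; so both halves of your plan fail: the number of distinct windows is not controlled by $u+r\cdot(\cdot)$, and exhibiting many distinct windows in $\mathcal S_n$ proves nothing about $u$. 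Pigeonhole/collision arguments of Jansen type can only ever produce logarithmic lower bounds (a repeated window with different successors), whereas the bound $\frac{nN-1}{N+r}$ claimed here is linear in $n$; no counting of window states reaches it.

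The paper's proof uses the degree-$r$ hypothesis in a completely different way, through pole orders rather than successor counts. Assuming $f$ generates all $N$ sequences with window length $u$, one transports the recursion into the function field via Lemma~\ref{lem1} to get that the single function $g=\varphi^{-u}(z)-f\bigl(z,\varphi^{-1}(z),\dots,\varphi^{-u+1}(z)\bigr)$ vanishes at the $(n-u)N$ places $\varphi^{j}(Q_{i,1})$. Since $(z)_\infty=P$, the valuation $v_{\varphi^{-u}(P)}(\varphi^{-u}(z))=-1$ while the $f$-term has nonnegative valuation there, so $g\neq 0$; and because $f$ has degree at most $r$ in each variable, $g\in\mathcal{L}\bigl(\varphi^{-u}(P)+r\sum_{i=0}^{u-1}\varphi^{-i}(P)\bigr)$, a divisor of degree $1+ur$. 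A nonzero function cannot have more zeros than the degree of its pole divisor, whence $1+ur\ge (n-u)N$ and $u\ge\frac{nN-1}{N+r}$; the second case restricts to $0\le j\le d-u-1$ across the $\lfloor n/d\rfloor$ blocks. Your instinct about comparing pole divisors of $z\circ\psi$ is pointing at the right mechanism, but it must be deployed to bound the zero set of one auxiliary function $g$, not to count window collisions.
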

\begin{proof}
Our first goal is to show that $\mathbf{s}_i$ is not a zero sequence for $1\leq i\leq N$. By the definition of $\mathbf{s}_i$, we deduce that ${\rm deg}((z)_0)={\rm deg}((z)_\infty)=1$ since $(z)_\infty=P$ and ${\rm deg}((z)_\infty)=1$. If $\mathbf{s}_i$ is a zero sequence, then there exist at least $q-1-d$ rational places which are zeros of $z$. Due to $q-1-d>1$, we obtain ${\rm deg}((z)_0)>1$ which leads to a contradiction. Hence, for $1\leq i\leq N$, $\mathbf{s}_i$ is not a zero sequence and $N_r(\mathcal{S}_n)\geq 1$.

Let $n>1$ be an integer. Assume that $f\in\Ff_q[x_1,\cdots,x_u]$ with $1\leq u\leq n-1$ is a polynomial of degree at most $r$ in every variable satisfying
\begin{equation}\label{thme1}
s_i(j+u)=f(s_i(j),s_i(j+1),\cdots,s_i(j+u-1)),
\end{equation}
for $0\leq j\leq n-u-1$ and $1\leq i\leq N$. In order to determine the lower bound of $N_r(\mathcal{S}_n)$, we divide the computation into two cases.

$Case\ 1:$ If $n\leq d$, it follows from (\ref{thme1}) that
$$
z(\varphi^{j+u}(Q_{i,1}))-f(z(\varphi^{j}(Q_{i,1})),z(\varphi^{j+1}(Q_{i,1})),\cdots,z(\varphi^{j+u-1}(Q_{i,1})))=0,
$$
for $0\leq j\leq n-u-1$ and $1\leq i\leq N$. Using Lemma \ref{lem1}, we have
\begin{eqnarray*}
&&z(\varphi^{j+u}(Q_{i,1}))-f(z(\varphi^{j}(Q_{i,1})),z(\varphi^{j+1}(Q_{i,1})),\cdots,z(\varphi^{j+u-1}(Q_{i,1})))\\
&=&\varphi^{-u}(z)(\varphi^{j}(Q_{i,1}))-f(z(\varphi^{j}(Q_{i,1})),\varphi^{-1}(z)(\varphi^{j}(Q_{i,1})),\cdots,\varphi^{-u+1}(z)(\varphi^{j}(Q_{i,1})))\\
&=&\left(\varphi^{-u}(z)-f(z,\varphi^{-1}(z),\cdots,\varphi^{-u+1}(z))\right)(\varphi^{j}(Q_{i,1})),
\end{eqnarray*}
and so we get that
\begin{equation}\label{thme2}
\left(\varphi^{-u}(z)-f(z,\varphi^{-1}(z),\cdots,\varphi^{-u+1}(z))\right)(\varphi^{j}(Q_{i,1}))=0,
\end{equation}
for $0\leq j\leq n-u-1$ and $1\leq i\leq N$.

We write $g=\varphi^{-u}(z)-f(z,\varphi^{-1}(z),\cdots,\varphi^{-u+1}(z))$. Note that $(z)_\infty=P$. Then we obtain $v_P(z)=-1$. By Lemma \ref{lem1}, we deduce that $v_{\varphi^{-u}(P)}(\varphi^{-u}(z))=-1$. Clearly, for any $0\leq t\leq u-1$, $\varphi^{-u}(P)$ is not a pole of $\varphi^{-t}(z)$. Consequently, we have
$$
v_{\varphi^{-u}(P)}(f(z,\varphi^{-1}(z),\cdots,\varphi^{-u+1}(z)))\geq 0
$$
and
$$
v_{\varphi^{-u}(P)}(f(z,\varphi^{-1}(z),\cdots,\varphi^{-u+1}(z)))\neq v_{\varphi^{-u}(P)}(\varphi^{-u}(z)),
$$
which implies that $\varphi^{-u}(z)\neq f(z,\varphi^{-1}(z),\cdots,\varphi^{-u+1}(z))$, namely, $g\neq 0$. From the discussion above, it is easy to see that
$$
g\in\mathcal{L}\left(\varphi^{-u}(P)+r\sum_{i=0}^{u-1}\varphi^{-i}(P)\right).
$$
By (\ref{thme2}), we get that $g(\varphi^{j}(Q_{i,1}))=0$ for $0\leq j\leq n-u-1$ and $1\leq i\leq N$. So we obtain
$$
g\in\mathcal{L}\left(\varphi^{-u}(P)+r\sum_{i=0}^{u-1}\varphi^{-i}(P)-\sum_{i=1}^N\sum_{j=0}^{n-u-1}\varphi^{j}(Q_{i,1})\right).
$$
It follows from the fact $g\neq 0$ that
$$
{\rm deg}\left(\varphi^{-u}(P)+r\sum_{i=0}^{u-1}\varphi^{-i}(P)-\sum_{i=1}^N\sum_{j=0}^{n-u-1}\varphi^{j}(Q_{i,1})\right)\geq 0.
$$
Therefore, $1+ur\geq (n-u)N$, i.e., $u\geq \frac{nN-1}{N+r}$.

$Case\ 2:$ If $d\leq n\leq dM$, then we get that
$$
z(\varphi^{j+u}(Q_{i,l}))-f(z(\varphi^{j}(Q_{i,l})),z(\varphi^{j+1}(Q_{i,l})),\cdots,z(\varphi^{j+u-1}(Q_{i,l})))=0,
$$
for $0\leq j\leq d-u-1$, $1\leq l\leq\lfloor\frac{n}{d}\rfloor$ and $1\leq i\leq N$, where the equality follows from only part of the cases of (\ref{thme1}). Using the same argument as in the computation of Case 1, we derive that there exists a nonzero function $g$ such that
$$
g\in\mathcal{L}\left(\varphi^{-u}(P)+r\sum_{i=0}^{u-1}\varphi^{-i}(P)-\sum_{i=1}^N\sum_{j=0}^{d-u-1}\sum_{l=1}^{\lfloor\frac{n}{d}\rfloor}\varphi^{j}(Q_{i,l})\right).
$$
Then, we have $u\geq\frac{dN\lfloor\frac{n}{d}\rfloor-1}{N\lfloor\frac{n}{d}\rfloor+r}$, which completes the proof of the theorem.
\end{proof}

\begin{remark}\label{r1}
In Theorem \ref{thm1}, the lower bound on $N_r(\mathcal{S}_n)$ is of order of magnitude $nN/(N+r)$ if $1<n\leq d$ and the lower bound on $N_r(\mathcal{S}_n)$ is of order of magnitude $nN/(N\lfloor\frac{n}{d}\rfloor+r)$ if\ $d\leq n\leq dM$. In the latter case, the order of magnitude of the lower bound is large when we take a large $d$. For instance, if $q\equiv1\mod 5$, we set $d=(q-1)/5$. Then the lower bound is of order of magnitude $q/r$.
\end{remark}

\section{Multisequences from Hermitian function fields}\label{S4}

In this section, we present two constructions of multisequences arising from the Hermitian function fields and evaluate the lower bound on the joint nonlinear complexity of the multisequences.

\subsection{The first construction of multisequences}

Recall that $H=\Ff_{q^2}(x,y)$ is the Hermitian function field and it genus $g=\frac{q^2-q}{2}$. There is an cyclic group $R$ of order $q^2-1$ generated by an automorphism $\sigma=\varphi_{\delta,0,0}$ of $H$. From Lemma \ref{lem3}, under the action of $R$ on all rational places of $H$, there exists $q$ orbits each containing $q^2-1$ distinct rational places. Let $M,N$ be positive integers with $MN=q-1$. We write all the $q$ orbits by
\begin{eqnarray*}
&&P,\sigma(P),\cdots,\sigma^{q^2-2}(P);\\
&&Q_{1,1},\sigma(Q_{1,1}),\cdots,\sigma^{q^2-2}(Q_{1,1});\\
&&Q_{1,2},\sigma(Q_{1,2}),\cdots,\sigma^{q^2-2}(Q_{1,2});\\
&&\ \ \ \ \ \ \ \ \ \ \ \ \ \ \ \ \ \vdots\\
&&Q_{1,M},\sigma(Q_{1,M}),\cdots,\sigma^{q^2-2}(Q_{1,M});\\
&&\ \ \ \ \ \ \ \ \ \ \ \ \ \ \ \ \ \vdots\\
&&Q_{N,M},\sigma(Q_{N,M}),\cdots,\sigma^{q^2-2}(Q_{N,M}).\\
\end{eqnarray*}
Assume that $P_{\infty}$ is the unique pole of $x$. It follows from the Riemann-Roch Theorem that $\mathcal{L}((2g-1)P_\infty+P)$ is a $(g+1)$-dimensional vector space over $\Ff_{q^2}$. Then, we can choose a function $z\in\mathcal{L}((2g-1)P_\infty+P)$ such that $(z)_\infty=kP_\infty+P$ with some $k\leq 2g-1$. Below, we state our construction of multisequences.

We define a set as
\begin{equation}\label{con2}
\mathcal{S}=\{\mathbf{s}_i:i=1,2,\cdots,N\},
\end{equation}
where
\begin{eqnarray*}
\mathbf{s}_i&=&\{s_i(j)\}_{j=0}^{M(q^2-1)-1}\\
&=&(z(Q_{i,1}),z(\sigma(Q_{i,1}))\cdots,z(\sigma^{q^2-2} (Q_{i,1})),\cdots,z(Q_{i,M}),\cdots,z(\sigma^{q^2-2} (Q_{i,M}))).
\end{eqnarray*}
Thus, $\mathcal{S}$ is a multisequence of length $M(q^2-1)$ over $\Ff_{q^2}$ and its dimension is $N$. In addition, $\mathbf{s}_i$ is a periodic sequence with the period $q^2-1$ if $M=1$. Next, we determine the lower bound of the joint nonlinear complexity for $\mathcal{S}$.

\begin{thm}\label{thm2}
Let $q\geq 5$ be a prime power and $N,M$ positive integers with $NM=q-1$. Let $\mathcal{S}$ be the multisequence defined by (\ref{con2}). Then,
$$
N_r(\mathcal{S}_n)\geq\left\{
            \begin{array}{ll}
              \frac{nN-1}{N+(q^2-q)r},& \hbox{if\ $1<n\leq q^2-1$,}\\
              \frac{(q^2-1)N\left\lfloor n/(q^2-1)\right\rfloor-1}{N\left\lfloor n/(q^2-1)\right\rfloor+(q^2-q)r},& \hbox{if\ $q^2-1\leq n\leq M(q^2-1)$,}
            \end{array}
          \right.
$$
for any integer $r$ with $1\leq r\leq q^2-1$.
\end{thm}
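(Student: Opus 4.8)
The argument follows the same scheme as the proof of Theorem~\ref{thm1}, adapted to the Hermitian setting. First I would establish that each $\mathbf{s}_i$ is nonzero: since $(z)_\infty=kP_\infty+P$ with $k\leq 2g-1=q^2-q-1$, the zero divisor $(z)_0$ has degree $k+1\leq q^2-q$. If some $\mathbf{s}_i$ were the zero sequence, then $z$ would vanish at all $q^2-1$ rational places in the orbit of $Q_{i,1}$, forcing $\deg((z)_0)\geq q^2-1>q^2-q$, a contradiction. Hence $N_r(\mathcal{S}_n)\geq 1$. Then, fixing $n>1$ and a polynomial $f\in\Ff_{q^2}[x_1,\dots,x_u]$ of degree at most $r$ in each variable with $1\leq u\leq n-1$ generating all the $\mathbf{s}_i$ on their first $n$ terms, I would rewrite the defining relations via Lemma~\ref{lem1} (parts 2 and 3): the equation $s_i(j+u)=f(s_i(j),\dots,s_i(j+u-1))$ becomes $g(\sigma^{j}(Q_{i,\ell}))=0$ where $g=\sigma^{-u}(z)-f(z,\sigma^{-1}(z),\dots,\sigma^{-u+1}(z))$, exactly as in Case~1 of Theorem~\ref{thm1}.

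The key points where the Hermitian case differs are the pole divisor bookkeeping for $g$ and the argument that $g\neq 0$. Since $(z)_\infty=kP_\infty+P$, each $\sigma^{-t}(z)$ has pole divisor $k\,\sigma^{-t}(P_\infty)+\sigma^{-t}(P)=kP_\infty+\sigma^{-t}(P)$, using $\sigma(P_\infty)=P_\infty$ (as $\sigma\in\mathcal{G}_\infty$). Therefore $g$ lies in $\mathcal{L}\bigl((q^2-q)rP_\infty + \sigma^{-u}(P)+r\sum_{i=0}^{u-1}\sigma^{-i}(P)\bigr)$ — here the factor $q^2-q$ in the bound comes from the contribution $k\leq q^2-q-1\leq q^2-q$ of the pole at $P_\infty$ in each of the at-most-$r$-fold products, absorbed conservatively; I would check that $r(k+1)\le (q^2-q)r$ suffices, or track it more tightly if needed. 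To see $g\neq 0$, evaluate $v_{\sigma^{-u}(P)}$: we have $v_{\sigma^{-u}(P)}(\sigma^{-u}(z))=v_P(z)=-1$, while $\sigma^{-u}(P)\neq\sigma^{-t}(P)$ for $0\leq t\leq u-1$ (the orbit of $P$ under $R$ has $q^2-1$ elements by Lemma~\ref{lem3}, and $u\leq n-1\leq q^2-2$), so $f(z,\dots,\sigma^{-u+1}(z))$ is regular at $\sigma^{-u}(P)$; hence $v_{\sigma^{-u}(P)}(g)=-1$ and $g\neq 0$.

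Now in Case~1 ($1<n\leq q^2-1$) the relations hold for $0\leq j\leq n-u-1$ and $1\leq i\leq N$, and the points $\sigma^{j}(Q_{i,1})$ are $(n-u)N$ distinct places disjoint from the poles of $g$, so $g\in\mathcal{L}\bigl((q^2-q)rP_\infty+\sigma^{-u}(P)+r\sum_{i=0}^{u-1}\sigma^{-i}(P)-\sum_{i=1}^{N}\sum_{j=0}^{n-u-1}\sigma^{j}(Q_{i,1})\bigr)$; since $g\neq 0$ the degree of this divisor is $\geq 0$, giving $(q^2-q)r+1+ur\geq(n-u)N$, hence $u\geq\frac{nN-1}{N+(q^2-q)r}$ (the $ur$ term is bounded by $ur\le (q^2-q)r$ only when convenient; the clean inequality $(q^2-q)r + ur \ge (n-u)N - 1$ rearranges directly, and I would simplify to the stated form, perhaps after noting $u\le q^2-2$). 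In Case~2 ($q^2-1\leq n\leq M(q^2-1)$) I restrict to the sub-relations with $0\leq j\leq (q^2-1)-u-1$ across all $\ell$ with $1\leq\ell\leq\lfloor n/(q^2-1)\rfloor$ and all $i$, yielding $(q^2-1)N\lfloor n/(q^2-1)\rfloor - uN\lfloor n/(q^2-1)\rfloor$ vanishing conditions and the bound $u\geq\frac{(q^2-1)N\lfloor n/(q^2-1)\rfloor-1}{N\lfloor n/(q^2-1)\rfloor+(q^2-q)r}$. The main obstacle is purely bookkeeping: correctly accounting for the $kP_\infty$ contribution in the pole divisor of $g$ and verifying that the crude bound $k+1\le q^2-q$ lets the $P_\infty$-term be absorbed into the factor $(q^2-q)r$ cleanly, while keeping the orbit-disjointness (Lemma~\ref{lem3}) and the $q\ge 5$ hypothesis (which guarantees $q^2-q-1>0$ and enough orbits) in force throughout.
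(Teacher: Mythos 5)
Your overall strategy is exactly the paper's (nonzeroness via degree of $(z)_0$ versus $(z)_\infty$, then $g=\sigma^{-u}(z)-f(z,\sigma^{-1}(z),\dots,\sigma^{-u+1}(z))\neq 0$ via the valuation at $\sigma^{-u}(P)$, then a Riemann--Roch degree count), but there is a genuine error in the one step you yourself flag as the delicate one: the coefficient of $P_\infty$ in the pole divisor of $g$. You take it to be $(q^2-q)r$, reasoning that each variable contributes a pole of order $k\le q^2-q-1$ ``in each of the at-most-$r$-fold products.'' That undercounts by a factor of $u$: a monomial of $f$ has the form $\prod_{t=0}^{u-1}(\sigma^{-t}(z))^{e_t}$ with $e_t\le r$, and since $\sigma$ fixes $P_\infty$ each factor $\sigma^{-t}(z)$ has a pole of order $k$ there, so the monomial has pole order up to $k\sum_t e_t\le kru\le (2g-1)ru$ at $P_\infty$. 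The correct divisor is therefore $(2g-1)ru\,P_\infty+\sigma^{-u}(P)+r\sum_{t=0}^{u-1}\sigma^{-t}(P)$, which is what the paper uses. This is not a cosmetic issue: your smaller divisor does not actually contain $g$, so the degree inequality you derive from it is unjustified, and --- as you notice --- the algebra then refuses to rearrange into the stated bound (you are left trying to absorb an $ur$ term ``when convenient'' or invoking $u\le q^2-2$, neither of which closes the gap).

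With the corrected coefficient everything falls into place with no fudging: nonnegativity of the degree gives $(2g-1)ru+1+ru\ge N(n-u)$, i.e.\ $(q^2-q)ru+1\ge N(n-u)$, which rearranges exactly to $u\ge\frac{nN-1}{N+(q^2-q)r}$; the analogous count in Case~2 with the zeros $\sigma^{j}(Q_{i,\ell})$ for $0\le j\le q^2-2-u$ and $1\le\ell\le\lfloor n/(q^2-1)\rfloor$ gives $(q^2-q)ru+1\ge N(q^2-1-u)\lfloor n/(q^2-1)\rfloor$ and hence the second bound. The rest of your outline (the nonzeroness argument, which is in fact slightly cleaner than the paper's since one orbit already suffices; the observation that $\sigma^{-u}(P)\neq\sigma^{-t}(P)$ for $t<u$ because the $R$-orbit of $P$ has $q^2-1$ elements) is correct and matches the paper.
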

\begin{proof}
Firstly, we prove that $\mathbf{s}_i$ is not a zero sequence for $1\leq i\leq N$. Notice that $(z)_\infty=kP_\infty+P$ with some $k\leq 2g-1$ which implies that
$$
{\rm deg}((z)_\infty)=k+1\leq 2g=q^2-q.
$$
For any $1\leq i\leq N$, if $\mathbf{s}_i$ is a zero sequence, then each $\sigma^j(Q_{i,l})$ is a zero of the function $z$ with $0\leq j\leq q^2-2$, $1\leq i\leq N$ and $1\leq l\leq M$. So we have
$$
{\rm deg}((z)_0)\geq (q^2-1)MN=(q^2-1)(q-1).
$$
Based on the fact that ${\rm deg}((z)_\infty)={\rm deg}((z)_0)$, we deduce that
$$
(q^2-1)(q-1)\leq{\rm deg}((z)_0)={\rm deg}((z)_\infty)\leq q^2-q.
$$
This leads to a contradiction. Consequently, $\mathbf{s}_i$ is not a zero sequence for $1\leq i\leq N$.

Suppose that $n>1$ is an integer and $u$ is an integer with $1\leq u\leq n-1$. Let $f\in\Ff_{q^2}[x_1,\cdots,x_u]$ be a polynomial whose degree is at most $r$ in each variable such that
\begin{equation}\label{thm2e1}
s_i(j+u)=f(s_i(j),s_i(j+1),\cdots,s_i(j+u-1)),
\end{equation}
for $0\leq j\leq n-u-1$ and $1\leq i\leq N$.

$Case\ 1:$ For $1<n\leq q^2-1$, By (\ref{thm2e1}), we derive that
$$
z(\sigma^{j+u}(Q_{i,1}))-f(z(\sigma^{j}(Q_{i,1})),z(\sigma^{j+1}(Q_{i,1})),\cdots,z(\sigma^{j+u-1}(Q_{i,1})))=0,
$$
for $0\leq j\leq n-u-1$ and $1\leq i\leq N$. It follows from Lemma \ref{lem1} that
\begin{equation}\label{thm2e2}
\left(\sigma^{-u}(z)-f(z,\sigma^{-1}(z),\cdots,\sigma^{-u+1}(z))\right)(\sigma^{j}(Q_{i,1}))=0,
\end{equation}
for $0\leq j\leq n-u-1$ and $1\leq i\leq N$.

Set $h=\sigma^{-u}(z)-f(z,\sigma^{-1}(z),\cdots,\sigma^{-u+1}(z))$. Since $(z)_\infty=kP_\infty+P$, we get that
$$
v_{\sigma^{-u}(P)}(\sigma^{-u}(z))=-1, \ \ \ v_{\sigma^{-u}(P)}(f(z,\sigma^{-1}(z),\cdots,\sigma^{-u+1}(z))\geq 0,
$$
which implies that $h\neq 0$. Note that $\sigma$ preserves $P_{\infty}$. It can be easily seen that $v_{P_{\infty}}(\sigma^{-t}(z))=-k\geq-(2g-1)$ for any integer $t\geq 0$. Hence, we have
$$
h\in\mathcal{L}\left((2g-1)ruP_\infty+\sigma^{-u}(P)+r\sum_{i=0}^{u-1}\sigma^{-i}(P)\right).
$$
According to (\ref{thm2e2}), we deduce that $\sigma^{j}(Q_{i,1})$ is a zero of $h$ for $0\leq j\leq n-u-1$ and $1\leq i\leq N$, which implies that
$$
h\in\mathcal{L}\left((2g-1)ruP_\infty+\sigma^{-u}(P)+r\sum_{i=0}^{u-1}\sigma^{-i}(P)-\sum_{i=1}^N\sum_{j=0}^{n-u-1}\sigma^{j}(Q_{i,1})\right).
$$
Due to $h\neq 0$, we obtain
$$
{\rm deg}\left((2g-1)ruP_\infty+\sigma^{-u}(P)+r\sum_{i=0}^{u-1}\sigma^{-i}(P)-\sum_{i=1}^N\sum_{j=0}^{n-u-1}\sigma^{j}(Q_{i,1})\right)\geq 0,
$$
namely,
$$
(2g-1)ru+1+ru\geq N(n-u).
$$
Therefore, $u\geq\frac{nN-1}{N+(q^2-q)r}$.

$Case\ 2:$ For $q^2-1\leq n\leq M(q^2-1)$, by applying only part of the cases of (\ref{thm2e1}), proceeding as in the proof of Case 1, we get that $h=\sigma^{-u}(z)-f(z,\sigma^{-1}(z),\cdots,\sigma^{-u+1}(z))$ is a nonzero function that belongs to
$$
\mathcal{L}\left((2g-1)ruP_\infty+\sigma^{-u}(P)+r\sum_{i=0}^{u-1}\sigma^{-i}(P)-\sum_{i=1}^N\sum_{j=0}^{q^2-u-2}\sum_{l=1}^{\lfloor n/(q^2-1)\rfloor}\sigma^{j}(Q_{i,l})\right).
$$
So the degree of the divisor
$$
(2g-1)ruP_\infty+\sigma^{-u}(P)+r\sum_{i=0}^{u-1}\sigma^{-i}(P)-\sum_{i=1}^N\sum_{j=0}^{q^2-u-2}\sum_{l=1}^{\lfloor n/(q^2-1)\rfloor}\sigma^{j}(Q_{i,l})
$$
is nonnegative, i.e., $2gru+1\geq N(q^2-u-1)\lfloor n/(q^2-1)\rfloor$. Hence, $u\geq\frac{(q^2-1)N\left\lfloor n/(q^2-1)\right\rfloor-1}{N\left\lfloor n/(q^2-1)\right\rfloor+(q^2-q)r}$. This completes the proof of the theorem.
\end{proof}
\begin{remark}\label{r2}
It can be seen that the lower bound on $N_r(\mathcal{S}_n)$ in Theorem \ref{thm2} is of order of magnitude $nN/(rq^2)$. By maximizing the parameter $n$, the order of magnitude of the lower bound is $q/r$.
\end{remark}

\subsection{The second construction of multisequences}

By Lemma \ref{lem4}, the automorphism group $G$ generated by $\phi$ divides all rational places $\neq P_\infty$ of $H$ into $q^3/p$ orbits and every orbit contains $p$ distinct elements. Label all the $q^3/p$ orbits as follows:
\begin{eqnarray*}
&&P,\phi(P),\cdots,\phi^{p-1}(P);\\
&&Q_{1,1},\phi(Q_{1,1}),\cdots,\phi^{p-1}(Q_{1,1});\\
&&Q_{1,2},\phi(Q_{1,2}),\cdots,\phi^{p-1}(Q_{1,2});\\
&&\ \ \ \ \ \ \ \ \ \ \ \ \ \ \ \ \ \vdots\\
&&Q_{1,M},\phi(Q_{1,M}),\cdots,\phi^{p-1}(Q_{1,M});\\
&&\ \ \ \ \ \ \ \ \ \ \ \ \ \ \ \ \ \vdots\\
&&Q_{N,M},\phi(Q_{N,M}),\cdots,\phi^{p-1}(Q_{N,M}),\\
\end{eqnarray*}
where $N,M$ are positive integers with $NM=q^3/p-1$. Let $P_{\infty}$ be the unique pole of $x$. Clearly, $\phi(P_{\infty})=P_{\infty}$. According to the Riemann-Roch Theorem, we obtain that $\mathcal{L}((2g-1)P_\infty+P)$ is a vector space over $\Ff_{q^2}$ and its dimension is $g+1$. So there is a function $z\in\mathcal{L}((2g-1)P_\infty+P)$ such that $(z)_\infty=kP_\infty+P$ with some $k\leq 2g-1$. Define a set of sequences by
\begin{equation}\label{con3}
\mathcal{S}=\{\mathbf{s}_i:i=1,2,\cdots,N\},
\end{equation}
where
\begin{eqnarray*}
\mathbf{s}_i&=&\{s_i(j)\}_{j=0}^{Mp-1}\\
&=&(z(Q_{i,1}),z(\phi(Q_{i,1}))\cdots,z(\phi^{p-1} (Q_{i,1})),\cdots,z(Q_{i,M}),\cdots,z(\phi^{p-1} (Q_{i,M}))).
\end{eqnarray*}
Then $\mathcal{S}$ is a multisequence of dimension $N$ over $\Ff_{q^2}$. The length of each sequence $\mathbf{s}_i$ of $\mathcal{S}$ is $Mp$. What is more, for any $1\leq i\leq N$, $\mathbf{s}_i$ is a periodic sequence with the period $p$ if $M=1$.

\begin{thm}\label{thm3}
Suppose that $p$ is an odd prime and $q$ is a power of $p$. Let $N,M$ positive integers with $NM=q^3/p-1$. Let $\mathcal{S}$ be the multisequence defined by (\ref{con3}). Then, we have
$$
N_r(\mathcal{S}_n)\geq\left\{
            \begin{array}{ll}
              \frac{nN-1}{N+(q^2-q)r},& \hbox{if\ $1<n\leq p$,}\\
              \frac{pN\left\lfloor n/p\right\rfloor-1}{N\left\lfloor n/p\right\rfloor+(q^2-q)r},& \hbox{if\ $p\leq n\leq Mp$,}
            \end{array}
          \right.
$$
for any integer $r$ with $1\leq r\leq q^2-1$.
\end{thm}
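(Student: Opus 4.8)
The plan is to reproduce, essentially line for line, the proof of Theorem~\ref{thm2}, with the cyclic group $R$ of order $q^2-1$ replaced throughout by the cyclic group $G=\langle\phi\rangle$ of order $p$ furnished by Lemma~\ref{lem4}. The three features that drive the argument are the same as in Theorem~\ref{thm2}: $\phi$ fixes $P_\infty$; every orbit of $G$ on the rational places $\ne P_\infty$ has the common size $p$ (now playing the role of $q^2-1$); and the genus is $g=(q^2-q)/2$, so that $2g=q^2-q$. As before, the Riemann--Roch Theorem gives ${\rm dim}_{\Ff_{q^2}}\mathcal{L}((2g-1)P_\infty+P)=g+1>g={\rm dim}_{\Ff_{q^2}}\mathcal{L}((2g-1)P_\infty)$, so one may take $z$ with $(z)_\infty=kP_\infty+P$ for some $0\le k\le 2g-1$; then $v_P(z)=-1$, $v_{P_\infty}(z)=-k$, and $z$ is regular at every place lying in the orbit of any $Q_{i,l}$.

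First I would check that $\mathcal{S}$ is not the all-zero multisequence, so that $N_r(\mathcal{S}_n)\ge 1$. If every $\mathbf{s}_i$ vanished, then $z$ would have at least $pMN=p(q^3/p-1)=q^3-p$ zeros, contradicting ${\rm deg}((z)_0)={\rm deg}((z)_\infty)=k+1\le 2g=q^2-q<q^3-p$. In contrast to Theorems~\ref{thm1} and~\ref{thm2}, this count cannot be run on a single $\mathbf{s}_i$, since the orbit length $p$ may be smaller than $2g$; the zeros contributed by all $N$ sequences must be combined.

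For the main estimate, fix $n>1$; it suffices to treat a polynomial $f\in\Ff_{q^2}[x_1,\cdots,x_u]$ of degree $\le r$ in each variable, with $1\le u\le\min\{n-1,p-1\}$, that generates all $\mathbf{s}_i$ through index $n-1$ (the other possibilities for $N_r(\mathcal{S}_n)$ give the bound trivially, as each branch of the right-hand side is $<\min\{n,p\}$). Put $h=\phi^{-u}(z)-f(z,\phi^{-1}(z),\cdots,\phi^{-u+1}(z))$. Because the orbit of $P$ under $\phi$ has length $p$ and $u\le p-1$, the place $\phi^{-u}(P)$ differs from $P_\infty$ and from every $\phi^{-t}(P)$ with $0\le t\le u-1$; hence $v_{\phi^{-u}(P)}(\phi^{-u}(z))=-1$ while $f(z,\phi^{-1}(z),\cdots,\phi^{-u+1}(z))$ is regular at $\phi^{-u}(P)$, so $h\ne 0$, and --- using $v_{P_\infty}(\phi^{-t}(z))=-k\ge-(2g-1)$ --- one gets $h\in\mathcal{L}\bigl((2g-1)ruP_\infty+\phi^{-u}(P)+r\sum_{t=0}^{u-1}\phi^{-t}(P)\bigr)$. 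In Case~1 ($1<n\le p$), translating the recurrence along the orbit of each $Q_{i,1}$ by Lemma~\ref{lem1} shows that $h$ vanishes at the $N(n-u)$ distinct places $\phi^{j}(Q_{i,1})$, $0\le j\le n-u-1$, $1\le i\le N$; since $h\ne 0$, the degree of the divisor obtained by subtracting these zeros is nonnegative, i.e. $2gru+1\ge N(n-u)$, which rearranges to $u\ge(nN-1)/(N+(q^2-q)r)$. In Case~2 ($p\le n\le Mp$), one uses only the within-orbit instances of the recurrence for the $N\lfloor n/p\rfloor$ orbits $Q_{i,l}$ with $1\le l\le\lfloor n/p\rfloor$, so that $h$ vanishes at the $N\lfloor n/p\rfloor(p-u)$ places $\phi^{j}(Q_{i,l})$, $0\le j\le p-u-1$; the degree inequality becomes $2gru+1\ge N\lfloor n/p\rfloor(p-u)$, which rearranges to the second stated bound.

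I do not expect a real obstacle, since the whole argument is a transcription of the proof of Theorem~\ref{thm2} with $q^2-1$ replaced by $p$. The two points needing care are the non-degeneracy count, which must be taken over all $N$ sequences at once rather than sequence by sequence, and the reduction to $u\le p-1$, which guarantees that no orbit of $\langle\phi\rangle$ is traversed more than once when the recurrences are pushed into the function field --- this is the one place where the small, prime orbit length $p$ (rather than $q^2-1$) could matter, and it is harmless precisely because the target lower bound is itself strictly less than $p$.
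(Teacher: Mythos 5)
Your proposal is correct and follows exactly the route the paper intends: the paper's own ``proof'' of Theorem~\ref{thm3} consists of the single remark that one repeats the argument of Theorem~\ref{thm2} with the cyclic group of order $q^2-1$ replaced by $G=\langle\phi\rangle$ of order $p$, which is precisely what you carry out. Your two added observations --- that the non-degeneracy count must pool the zeros over all $N$ sequences because a single orbit of length $p$ may be shorter than $2g$, and that one may assume $u\le p-1$ since both stated bounds are strictly less than $p$ --- are genuine details the paper glosses over, and you handle them correctly.
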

\begin{proof}
Using the same argument as in the proof of Theorem \ref{thm2}, the desired result follows. We omit the details.
\end{proof}
\begin{remark}\label{r3}
For the case that $1<n\leq p$, if the order of magnitude of $N$ is higher than that of $q^2r$, then the lower bound on $N_r(\mathcal{S}_n)$ is of order of magnitude $n$. Otherwise, the lower bound is of order of magnitude $nN/q^2r$. For the case that $p\leq n\leq Mp$, the lower bound on $N_r(\mathcal{S}_n)$ is of order of magnitude $p$ if the order of magnitude of $N\left\lfloor n/p\right\rfloor$ is higher than that of $q^2r$. On the other hand, the lower bound is of order of magnitude $nN/q^2r$ if the order of magnitude of $q^2r$ is higher than that of $N\left\lfloor n/p\right\rfloor$. If we take the maximum $n=Mp$, then the order of magnitude of the lower bound is $q/r$.
\end{remark}

\section{Concluding remarks}\label{S5}

Let $\mathcal{R}$ be a random multisequence over $\Ff_q$ of length $n$ and dimension $m$. Meidl and Niederreiter \cite{Nonlinear} has pointed out that the expected order of magnitude of $N_r(\mathcal{R}_n)$ is ${\rm log}(mn)$ under the heuristic method. According to Remark \ref{r1}, Remark \ref{r2} and Remark \ref{r3}, it is obvious that the multisequences constructed in Theorem \ref{thm1}, Theorem \ref{thm2} and Theorem \ref{thm3} can be said to have high joint nonlinear complexity in appropriate cases on the parameters of the multisequences.

\end{document}